\documentclass[10pt,a4paper]{article}
\usepackage{sita2011}
\usepackage{amsmath,amssymb,amsthm}
\usepackage{mathptmx}
\usepackage{cite}
\usepackage{graphicx}
\usepackage{psfrag}
\usepackage{stfloats}
\DeclareMathAlphabet{\mathcal}{OMS}{cmsy}{m}{n}
\SetSymbolFont{operators}{bold}{OT1}{txr}{bx}{n}
\SetSymbolFont{letters}{bold}{OML}{txmi}{bx}{it}
\SetSymbolFont{symbols}{bold}{OMS}{txsy}{bx}{n}
\SetSymbolFont{largesymbols}{bold}{OMX}{txex}{bx}{n}
\usepackage{bm}
\title{Statistical Mechanical Analysis of Low-Density Parity-Check Codes on General Markov Channel}
\author{
Ryuhei~Mori 
\thanks{The work of R. Mori was supported by the Grant-in-Aid for Scientific Research for JSPS Fellows (22$\cdot$5936).}
\thanks{
Department of Systems Science, Graduate School of Informatics, Kyoto University,
Kyoto, 606-8501, Japan
(email: rmori@sys.i.kyoto-u.ac.jp, tt@i.kyoto-u.ac.jp).}%
\and 
Toshiyuki~Tanaka
\samethanks{2}}

\allowdisplaybreaks[1]

\newtheorem{theorem}{Theorem}
\newtheorem{lemma}[theorem]{Lemma}

\def\extr{\mathop{\rm extr}}

\abstract{
Low-density parity-check (LDPC) codes on symmetric memoryless channels
have been analyzed using statistical physics by several authors.
In this paper, statistical mechanical analysis of LDPC codes is performed for asymmetric memoryless channels and general Markov
channels. It is shown that the saddle point equations of the replica symmetric solution for a Markov channel is equivalent to the density
evolution of the belief propagation on the factor graph representing LDPC codes on the Markov channel.
The derivation uses the method of types for Markov chain.
}

\keywords{
Low-density parity-check codes, Markov channel, replica method, method of types, large deviations.
}

\begin{document}
\maketitle

\section{Introduction}
The replica method is a tool for the evaluation of free energy in statistical physics.
Although the mathematical rigorousness of the replica method has not been proved, the replica method has been used not only for
problems of statistical physics but for problems in several areas including information theory~\cite{tanaka2002statistical}.
The analysis of low-density parity-check (LDPC) codes using the replica method is shown for memoryless symmetric channels
by Murayama et al.~\cite{murayama2000statistical} and Montanari~\cite{montanari2001glassy}.
After some period, LDPC codes on binary-output asymmetric channel is analyzed by using the replica method by Neri et al.~\cite{neri2008gallager}.
On the other hand, LDPC codes on channel with memory is also of great interest in wireless communication and magnetic recording~\cite{eckford2004thesis}.
LDPC codes on restricted types of finite-state Markov channel are also analyzed by Neri and Skantzos~\cite{neri2009statistical}.
In this paper, we analyze LDPC codes on general asymmetric memoryless channel and general Markov channel by using the replica method.
In the analysis, we use the method of types for Markov chain
in order to generalize the method recently obtained in~\cite{mori2011connection}.

Despite lack of mathematical rigour, the replica method is known to yield rigorous bounds
in some cases~\cite{franz2003non-poisson}, \cite{montanari2005tight}.
Moreover, a tight bound of MAP threshold, which is derived in an elegant way in~\cite{measson2009generalized},
is found to empirically coincide with the result of the replica method.

\section{Memoryless channel}
\subsection{LDPC codes and the number of codewords}
In this paper, we deal with regular LDPC codes.
Generalization for irregular LDPC codes is straightforward as in~\cite{mori2011connection}.
We consider factor graph (Tanner graph) including $N$ variable nodes and $N(1-R)$ factor nodes (check nodes).
The degrees of variable nodes and factor nodes are $l$ and $r$, respectively, 
for which the relation $R=1-l/r$ holds.
Edge connection is randomly chosen uniformly from $(Nl)!$ possible connections.
Let $\mathcal{X}=\{0,1\}$ and $f(\bm{x}):\mathcal{X}^r \to \{0,1\}$ be a factor function taking 1 when the number of 1s in $\bm{x}$ is even,
and taking 0 otherwise.
Let $\bm{x}_{\partial a}$ be the neighborhood of a factor node $a$.
The uniform distribution on codewords $\bm{x}\in\mathcal{X}^N$ of the LDPC code 
defined by the Tanner graph is given by 
\begin{align*}
p(\bm{x}) &:= \frac1{Z_0}\prod_{a} f(\bm{x}_{\partial a})\\
Z_0 &:= \sum_{\bm{x}\in\mathcal{X}^N} \prod_{a}f(\bm{x}_{\partial a}).
\end{align*}
Here, $Z_0$ is a constant for the normalization,
which in this case is the number of codewords in an LDPC code.
Let $\mathbb{E}[\cdot]$ be the expectation with respect to the edge connections.
In~\cite{mori2011connection}, the following lemma is shown for general $f(\bm{x})$.
\begin{lemma}
\begin{multline}\label{eq:rate}
\lim_{N\to\infty}\frac1N\log\mathbb{E}[Z_0]\\
=\max_{(m_\mathrm{f\to v},m_\mathrm{v\to f})\in\mathcal{R}}
\left\{ \log Z_\mathrm{v} + \frac{l}{r} \log Z_\mathrm{f} 
-l\log Z_\mathrm{e}\right\}
\end{multline}
where $\mathcal{R}$ denotes the set of saddle points of the function 
for which the maximum is taken, and where
\begin{align*}
Z_\mathrm{v} &:= \sum_{x\in\mathcal{X}} m_\mathrm{f\to v}(x)^l,&
Z_\mathrm{f} &:= \sum_{\bm{x}\in\mathcal{X}^r} f(\bm{x}) \prod_{i=1}^r m_\mathrm{v\to f}(x_i)\\
Z_\mathrm{e} &:= \sum_{x\in\mathcal{X}} m_\mathrm{f\to v}(x)m_\mathrm{v\to f}(x).
\end{align*}
The saddle point equations determining the two probability functions 
$m_\mathrm{v\to f}(x),\,m_\mathrm{f\to v}(x)$, both defined on $\mathcal{X}$, are
\begin{equation}
\begin{split}
m_\mathrm{v\to f}(x) &\propto m_\mathrm{f\to v}(x)^{l-1}\\
m_\mathrm{f\to v}(x) &\propto \frac1r\sum_{i=1}^r \sum_{\bm{x}\in\mathcal{X}^r, x_i=x} f(\bm{x})\prod_{j=1, j\ne i}^rm_\mathrm{v\to f}(x_j).
\end{split}
\label{eq:saddle}
\end{equation}
\end{lemma}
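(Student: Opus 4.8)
The plan is to compute the annealed average $\mathbb{E}[Z_0]$ directly in the configuration (permutation) model and to extract its exponential growth rate by Laplace's method, with the two message functions emerging as the order parameters of the resulting variational problem. First I would write $\mathbb{E}[Z_0]=\sum_{\bm x}\mathbb{E}[\prod_a f(\bm x_{\partial a})]$ and observe that, for a uniformly random socket permutation, $\mathbb{E}[\prod_a f(\bm x_{\partial a})]$ depends on $\bm x$ only through its type $\{n_x\}_{x\in\mathcal{X}}$ with $n_x=\#\{i:x_i=x\}$. Each variable node of value $x$ contributes $l$ variable-sockets carrying the symbol $x$, so there are $ln_x$ such sockets. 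Counting permutations shows this expectation equals $\prod_x(ln_x)!/(Nl)!$ times the number of symbol assignments $\bm y$ to the $Nl$ check-sockets that have the same type $\{ln_x\}$ and satisfy every parity check, i.e.\ $\prod_a f(\bm y_{\partial a})=1$.

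The key step is to make the two sums factorize over nodes. Because the check-sockets partition into check nodes, the generating function of valid $\bm y$ weighted by $\prod_j \mu_{y_j}$ factorizes as $\sum_{\bm y}\prod_a f(\bm y_{\partial a})\prod_j\mu_{y_j}=Z_{\mathrm f}(\mu)^{Nl/r}$, with $Z_{\mathrm f}(\mu)=\sum_{\bm x\in\mathcal{X}^r}f(\bm x)\prod_i\mu_{x_i}$ which is exactly the stated $Z_{\mathrm f}$ once we identify $\mu=m_{\mathrm{v\to f}}$. Extracting the coefficient of $\prod_x\mu_x^{ln_x}$ by a contour integral then gives
\begin{equation*}
\mathbb{E}[Z_0]=\frac{N!}{(Nl)!}\sum_{\{n_x\}}\prod_x\frac{(ln_x)!}{n_x!}\oint\prod_x\frac{d\mu_x}{2\pi i\,\mu_x^{ln_x+1}}\,Z_{\mathrm f}(\mu)^{Nl/r}.
\end{equation*}
I would then set $n_x=Nc_x$, apply Stirling to all factorials, and evaluate both the sum over $\{c_x\}$ and the contour integral by saddle point as $N\to\infty$.

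The $\log N$ and constant terms cancel between $N!/(Nl)!$ and $\prod_x(ln_x)!/n_x!$, leaving a net $(l-1)\sum_x c_x\log c_x$, so that $\frac1N\log\mathbb{E}[Z_0]$ becomes the extremum of $(l-1)\sum_x c_x\log c_x-l\sum_x c_x\log\mu_x+\frac{l}{r}\log Z_{\mathrm f}(\mu)$. Introducing $m_{\mathrm{f\to v}}$ through $c_x\propto m_{\mathrm{f\to v}}(x)^l$, stationarity in $c$ yields $m_{\mathrm{v\to f}}(x)=\mu_x\propto m_{\mathrm{f\to v}}(x)^{l-1}$, the first equation of \eqref{eq:saddle}, while stationarity in $\mu$ gives $c_x\propto \mu_x\,\partial_{\mu_x}Z_{\mathrm f}/Z_{\mathrm f}$, which combined with $c_x\propto m_{\mathrm{f\to v}}(x)^l$ and the first relation reduces to the second equation of \eqref{eq:saddle}. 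Substituting back and using $Z_{\mathrm e}=Z_{\mathrm v}$ under this normalization collapses the extremand to $\log Z_{\mathrm v}+\frac{l}{r}\log Z_{\mathrm f}-l\log Z_{\mathrm e}$, as claimed, with $\mathcal{R}$ the set of stationary points.

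The hard part will be the rigorous justification of the two asymptotic evaluations. For the sum over types the method of types supplies matching upper and lower bounds, since there are only polynomially many types; the delicate point is the saddle-point evaluation of the coefficient-extraction integral, where I must route the contour through the positive real saddle, verify that this saddle is dominant, and control the error. I would also need to show that the relevant saddle $(m_{\mathrm{f\to v}},m_{\mathrm{v\to f}})$ is positive so that the logarithms are finite, and that extremization commutes with $\lim_{N\to\infty}$; subtleties such as the vanishing of $Z_{\mathrm f}$ for infeasible types (automatically handled because the extracted coefficient is then zero) and the selection of the dominant stationary point in $\mathcal{R}$ would require care.
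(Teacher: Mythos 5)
Your derivation is correct: the socket-counting identity $\mathbb{E}[\prod_a f(\bm{x}_{\partial a})]=\bigl(\prod_x(ln_x)!/(Nl)!\bigr)\cdot\#\{\bm{y}\}$, the factorization of the check-side generating function into $Z_\mathrm{f}(\mu)^{Nl/r}$, the Stirling cancellation down to $(l-1)\sum_x c_x\log c_x$, and the reduction of the stationarity conditions in $(c,\mu)$ to~\eqref{eq:saddle} with the extremand collapsing to $\log Z_\mathrm{v}+\frac{l}{r}\log Z_\mathrm{f}-l\log Z_\mathrm{e}$ all check out (e.g.\ at $c_x=\mu_x=1/2$ your functional evaluates to $(1-l/r)\log 2$, consistent with~\eqref{eq:asump}). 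Note, however, that this paper does not prove the lemma at all --- it is imported from~\cite{mori2011connection} --- and the route taken there, and extended throughout this paper to the Markov case, is different from yours: the annealed average is first written via the method of types as a constrained entropy maximization over node-, factor-, and edge-marginals (the memoryless analogue of~\eqref{eq:type}), and the message functions $m_\mathrm{v\to f},m_\mathrm{f\to v}$ then arise as Lagrange dual variables for the marginal-consistency constraints in the variational method of~\cite{yedidia2005constructing}. Your generating-function/contour-integral route reaches the same Bethe-type expression directly, with $\mu$ playing the role of that dual variable and with the tightness of the coefficient-extraction bound being a standard Chernoff-type statement; what it skips is precisely the intermediate entropy characterization, which is the form that generalizes to Markov types and Sanov's theorem in Section~3. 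Your closing list of technical caveats (dominance of the positive real saddle, strict positivity of the maximizer, selection within $\mathcal{R}$) correctly identifies where the remaining work lies; also be explicit that the overall structure is a max over types and a min over the contour radius, so that ``extremum'' really means a minimax whose value is read off at the saddle, which is why the lemma is phrased as a maximum over the saddle-point set $\mathcal{R}$ rather than a global maximum of the functional.
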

In the case of regular LDPC codes, it is proved~\cite{RiU05/LTHC} that 
the maximizer for the right-hand side of~\eqref{eq:rate} is given by $m_\mathrm{v\to f}(x)=m_\mathrm{f\to v}(x)=1/2$, 
for which
\begin{equation}
\lim_{N\to\infty} \frac1N \log \mathbb{E}[Z_0] = R.
\label{eq:asump}
\end{equation}
Since each independent factor reduces the number of codewords by half, 
the inequality $(1/N)\log Z_0 \ge R$ always holds.
Hence, for regular LDPC codes,
\begin{equation*}
\lim_{N\to\infty} \frac1N \mathbb{E}[\log Z_0] = R
\end{equation*}
and for any $\epsilon>0$,
\begin{equation}\label{eq:concentration}
\frac1N \log Z_0 \le R+\epsilon, \hspace{2em} \text{with probability } 1- o(1).
\end{equation}
The result of this paper can be generalized to any irregular LDPC code ensemble satisfying~\eqref{eq:asump}.

\subsection{Replica analysis}
In this subsection, we consider LDPC codes transmitted on an asymmetric memoryless channel.
Let $\mathcal{Y}$ be the output alphabet and $W(y\mid x)$ be a probability 
of an output $y\in\mathcal{Y}$ when $x\in\mathcal{X}$ is transmitted.
The a posteriori distribution of a codeword $\bm{x}\in\mathcal{X}^N$ given an output $\bm{y}\in\mathcal{Y}^N$ is
\begin{align*}
p(\bm{x}\mid \bm{y}) &:= \frac1Z\prod_{a} f(\bm{x}_{\partial a}) \prod_{i=1}^{N} W(y_i \mid x_i)\\
Z &:= \sum_{\bm{x}\in\mathcal{X}^N} \prod_{a}f(\bm{x}_{\partial a})\prod_{i=1}^{N} W(y_i \mid x_i).
\end{align*}
Let $\mathbf{E}[\cdot]$ be the expectation with respect to the output $\bm{y}\in\mathcal{Y}^N$ of the channel, which obeys
\begin{equation*}
p(\bm{y}) := \frac1{Z_0}\sum_{\bm{x}\in\mathcal{X}^N}\prod_{a}f(\bm{x}_{\partial a})\prod_{i=1}^{N} W(y_i \mid x_i).
\end{equation*}
The aim of this paper is to evaluate
\begin{equation}\label{eq:FE}
\lim_{N\to\infty}(1/N)\mathbb{E}[\mathbf{E}[\log Z]]
\end{equation}
which gives the expected conditional entropy of codeword given an output~\cite{macris2007griffith}.
The quantity~\eqref{eq:FE} is evaluated via the replica method. 
First, from~\eqref{eq:concentration}, $Z_0$ in the definition of $p(\bm{y})$ can be replaced by $2^{NR}$.
Then, one obtains
\begin{align*}
Z^n &= \sum_{\bm{x}\in(\mathcal{X}^n)^N} \left(\prod_{a}\prod_{k=1}^nf(\bm{x}_{\partial a}^{(k)})\right)\prod_{i=1}^{N}\prod_{k=1}^n W(y_i \mid x_i^{(k)})\\
\mathbf{E}[Z^n] &= \sum_{\bm{y}\in\mathcal{Y}^N} p(\bm{y})
\sum_{\bm{x}\in(\mathcal{X}^n)^N} \left(\prod_{a}\prod_{k=1}^nf(\bm{x}_{\partial a}^{(k)})\right)\prod_{i=1}^{N}\prod_{k=1}^n W(y_i \mid x_i^{(k)})\\
&=
\frac1{2^{NR}}
\sum_{\bm{x}\in(\mathcal{X}^{(n+1)})^N} \prod_{a}\prod_{k=0}^{n}f(\bm{x}_{\partial a}^{(k)})\\
&\quad\cdot\prod_{i=1}^{N}\left(\sum_{y\in\mathcal{Y}}\prod_{k=0}^{n} W(y \mid x_i^{(k)})\right)
\end{align*}
where $x^{(k)}\in\mathcal{X}$ and $\bm{z}^{(k)}\in\mathcal{X}^r$ denote the variables corresponding to
$k$th replica of $\bm{x}\in\mathcal{X}^{n+1}$ and $\bm{z}\in(\mathcal{X}^{n+1})^r$, respectively.
By using the method in~\cite{mori2011connection},
for $n\in\{0,1,2,\dotsc\}$, one obtains
\begin{multline}\label{eq:nth}
\lim_{N\to\infty}\frac1N\log \mathbb{E}[\mathbf{E}[Z^n]]\\
 = \max_{(m_\mathrm{f\to v}(\bm{x}),m_\mathrm{v\to f}(\bm{x}))\in\mathcal{R}}\left\{\log Z_\mathrm{v} + \frac{l}{r}\log Z_\mathrm{f} - l\log Z_\mathrm{e}\right\} -R
\end{multline}
where $\mathcal{R}$ denotes the set of saddle points of the function for which the maximum is taken, and where
\begin{align*}
Z_\mathrm{v} &:= \sum_{\bm{x}\in\mathcal{X}^{n+1}} \left(\sum_{y\in\mathcal{Y}}\prod_{k=0}^{n} W(y \mid x_i^{(k)})\right)m_\mathrm{f\to v}(\bm{x})^l\\
Z_\mathrm{f} &:= \sum_{\bm{x}\in(\mathcal{X}^{n+1})^r} \prod_{k=0}^{n}f(\bm{x}^{(k)}) \prod_{i=1}^r m_\mathrm{v\to f}(\bm{x}_i)\\
Z_\mathrm{e} &:= \sum_{\bm{x}\in\mathcal{X}^{n+1}} m_\mathrm{f\to v}(\bm{x})m_\mathrm{v\to f}(\bm{x}).
\end{align*}
 The replica symmetry (RS) assumption that we use is
\begin{equation}
\begin{split}
m_\mathrm{v\to f}(\bm{x}) &= m_\mathrm{v\to f}(x_0)\int \prod_{i=1}^nM_\mathrm{v\to f}(x_i)\mathrm{d}\Phi(M_\mathrm{v\to f}\mid x_0)\\
m_\mathrm{f\to v}(\bm{x}) &= m_\mathrm{f\to v}(x_0)\int \prod_{i=1}^nM_\mathrm{f\to v}(x_i)\mathrm{d}\hat{\Phi}(M_\mathrm{f\to v}\mid x_0)
\end{split}
\label{eq:RS}
\end{equation}
where $\Phi$ and $\hat{\Phi}$ are distributions of measures on $\mathcal{X}$.
In~\eqref{eq:RS}, the notation $x_i$ is used instead of $x^{(i)}$ for $i\in\{0,\dotsc,n\}$.
From~\eqref{eq:asump}, $R$ in the right-hand side of~\eqref{eq:nth} can be replaced by the right-hand side of~\eqref{eq:rate}.
On the RS assumption~\eqref{eq:RS}, we obtain the following lemma.
\begin{lemma}
For any $n\in\mathbb{R}$,
\begin{multline*}
\lim_{N\to\infty}\frac1N\log \mathbb{E}[\mathbf{E}[Z^n]]\\
 = \extr_{(\Phi,\hat{\Phi},m_\mathrm{f\to v},m_\mathrm{v\to f})}
\left\{\log Z_\mathrm{v}(n) + \frac{l}{r}\log Z_\mathrm{f}(n) - l\log Z_\mathrm{e}(n)\right\}
\end{multline*}
where $\extr_x\{F(x)\}$ denotes $F(x^*)$ where $x^*$ is the saddle point of $F(x)$, and
where
\begin{align*}
Z_\mathrm{v}(n) &:= \sum_{x\in\mathcal{X}}\frac{m_\mathrm{f\to v}(x)^l}{\sum_{x\in\mathcal{X}} m_\mathrm{f\to v}(x)^l} \sum_{y\in\mathcal{Y}} W(y \mid x)\\
&\quad\cdot\int\prod_{i=1}^l\mathrm{d}\hat{\Phi}(M_\mathrm{f\to v}^{(i)}\mid x)
\left(\sum_{x\in\mathcal{X}} W(y \mid x)\prod_{i=1}^lM_\mathrm{f\to v}^{(i)}(x)\right)^n\\
Z_\mathrm{f}(n) &:=  
\sum_{\bm{x}\in\mathcal{X}^r}\frac{ f(\bm{x}) \prod_{i=1}^rm_\mathrm{v\to f}(x_i)}
{\sum_{\bm{x}\in\mathcal{X}^r} f(\bm{x}) \prod_{i=1}^rm_\mathrm{v\to f}(x_i)}\\
&\quad\cdot\int\prod_{i=1}^r\mathrm{d}\Phi(M_\mathrm{v\to f}^{(i)}\mid x_i)
\left(\sum_{\bm{x}\in\mathcal{X}^r} f(\bm{x}) \prod_{i=1}^r M_\mathrm{v\to f}^{(i)}(x_i)\right)^n\\
Z_\mathrm{e}(n) &:= 
\sum_{x\in\mathcal{X}} \frac{m_\mathrm{f\to v}(x) m_\mathrm{v\to f}(x)}{\sum_{x\in\mathcal{X}}m_\mathrm{f\to v}(x) m_\mathrm{v\to f}(x)}
\int \mathrm{d}\Phi(M_\mathrm{v\to f}\mid x)\\
&\quad\cdot \mathrm{d}\hat{\Phi}(M_\mathrm{f\to v}\mid x)
\left(\sum_{x\in\mathcal{X}} M_\mathrm{f\to v}(x)M_\mathrm{v\to f}(x)\right)^n.
\end{align*}
\end{lemma}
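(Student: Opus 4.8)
The plan is to obtain the stated expressions by inserting the replica-symmetric ansatz~\eqref{eq:RS} into the integer-$n$ formula~\eqref{eq:nth} and then invoking the usual analytic continuation in $n$. The key structural fact is that among the $n+1$ replica indices $k=0,1,\dots,n$, the index $k=0$ plays a distinguished role: it is the replica coupled to the channel output distribution $p(\bm y)$, whereas $k=1,\dots,n$ enter symmetrically. This is exactly the asymmetry built into~\eqref{eq:RS}, where the scalar marginals $m_\mathrm{v\to f}(x_0),m_\mathrm{f\to v}(x_0)$ carry the $k=0$ dependence and the measure-valued order parameters $\Phi,\hat\Phi$ (conditioned on $x_0$) describe the remaining $n$ coordinates.

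First I would substitute~\eqref{eq:RS} into $Z_\mathrm{v},Z_\mathrm{f},Z_\mathrm{e}$ of~\eqref{eq:nth}. Raising $m_\mathrm{f\to v}(\bm x)$ to the $l$-th power in $Z_\mathrm{v}$ produces $l$ independent integration variables $M_\mathrm{f\to v}^{(1)},\dots,M_\mathrm{f\to v}^{(l)}$, each carrying its own factor $\mathrm{d}\hat\Phi(\,\cdot\mid x_0)$; likewise the product $\prod_{i=1}^r m_\mathrm{v\to f}(\bm x_i)$ in $Z_\mathrm{f}$ yields $r$ variables $M_\mathrm{v\to f}^{(1)},\dots,M_\mathrm{v\to f}^{(r)}$, while $Z_\mathrm{e}$ yields one of each. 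In every case the sum over the slave coordinates $x_1,\dots,x_n$ decouples: using $\sum_{y}\prod_{k=0}^n W(y\mid x_k)=\sum_{y} W(y\mid x_0)\prod_{k=1}^n W(y\mid x_k)$ and $\prod_{k=0}^n f(\bm x^{(k)})=f(\bm x^{(0)})\prod_{k=1}^n f(\bm x^{(k)})$, the $n$ identical slave replicas contribute the $n$-th power appearing in $Z_\mathrm{v}(n),Z_\mathrm{f}(n),Z_\mathrm{e}(n)$, while the $k=0$ sum over $x_0$ supplies the weights $m_\mathrm{f\to v}(x)^l$, $f(\bm x)\prod_i m_\mathrm{v\to f}(x_i)$, and $m_\mathrm{f\to v}(x)m_\mathrm{v\to f}(x)$, respectively.

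This computation shows that each partition function factorizes as a product of its $n=0$ value and the normalized object defined in the lemma, that is, $Z_\mathrm{v}=\bigl(\sum_{x} m_\mathrm{f\to v}(x)^l\bigr)Z_\mathrm{v}(n)$, and analogously for $Z_\mathrm{f}$ and $Z_\mathrm{e}$; dividing by the $n=0$ value is precisely what produces the normalizing denominators in $Z_\mathrm{v}(n),Z_\mathrm{f}(n),Z_\mathrm{e}(n)$. Consequently $\log Z_\mathrm{v}+\frac lr\log Z_\mathrm{f}-l\log Z_\mathrm{e}$ splits into a $k=0$ part and a slave part. The $k=0$ part coincides with the functional appearing in~\eqref{eq:rate}; under the extremization over $m_\mathrm{v\to f},m_\mathrm{f\to v}$ it attains the value $R$ by~\eqref{eq:asump}, canceling the $-R$ in~\eqref{eq:nth}. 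The remaining slave part is $\log Z_\mathrm{v}(n)+\frac lr\log Z_\mathrm{f}(n)-l\log Z_\mathrm{e}(n)$, and the maximization over $\mathcal R$ in~\eqref{eq:nth} turns into the extremization over the RS order parameters $(\Phi,\hat\Phi,m_\mathrm{f\to v},m_\mathrm{v\to f})$, yielding the claimed identity for nonnegative integer $n$.

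The main obstacle is the final step: the derivation above is exact only for $n\in\{0,1,2,\dots\}$, whereas the lemma asserts the identity for every real $n$. I would close the argument in the standard replica fashion, noting that $Z_\mathrm{v}(n),Z_\mathrm{f}(n),Z_\mathrm{e}(n)$ are each manifestly well defined and analytic in $n$ (the $n$-dependence enters only through the real powers of the bracketed sums), and then postulating that the integer-$n$ identity admits a unique analytic continuation to all $n\in\mathbb R$. This continuation---which is ultimately what enables the $n\to0$ limit needed for~\eqref{eq:FE}---is the non-rigorous heart of the method; a secondary technical point is the tacit assumption that restricting the saddle-point set $\mathcal R$ to the RS subspace parametrized by~\eqref{eq:RS} captures the relevant extremum.
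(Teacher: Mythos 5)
Your proposal is correct and follows essentially the route the paper intends (the paper omits the computation, merely saying the lemma follows from substituting the RS ansatz~\eqref{eq:RS} into~\eqref{eq:nth} as in~\cite{mori2011connection}): the factorizations $Z_\mathrm{v}=\bigl(\sum_x m_\mathrm{f\to v}(x)^l\bigr)Z_\mathrm{v}(n)$ and its analogues, the cancellation of the $k=0$ part against $-R$ via~\eqref{eq:rate} and~\eqref{eq:asump}, and the analytic continuation to real $n$ are exactly the steps involved. You also correctly flag the continuation and the restriction to the RS subspace as the non-rigorous points, which is consistent with the paper's own caveats about the replica method.
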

From this lemma, the RS solution is obtained straightforwardly~\cite{mori2011connection}.
\begin{theorem}\label{thm:asym}
\begin{align*}
&\lim_{N\to\infty}\frac1N\mathbb{E}[\mathbf{E}[\log Z]]=\lim_{n\to 0}\frac1n\lim_{N\to\infty}\frac1N\log\mathbb{E}[\mathbf{E}[Z^n]]\\
& = \extr_{(\Phi,\hat{\Phi},m_\mathrm{v\to f},m_\mathrm{f\to v})}\Bigg\{
\frac{l}{r} \sum_{\bm{x}\in\mathcal{X}^r}\frac{f(\bm{x}) \prod_{i=1}^rm_\mathrm{v\to f}(x_i)}
{\sum_{\bm{x}\in\mathcal{X}^r} f(\bm{x}) \prod_{i=1}^rm_\mathrm{v\to f}(x_i)}\\
&\quad\cdot \int\prod_{i=1}^r\mathrm{d}\Phi(M_\mathrm{v\to f}^{(i)}\mid x_i)
\log \left(\sum_{\bm{x}\in\mathcal{X}^r} f(\bm{x}) \prod_{i=1}^r M_\mathrm{v\to f}^{(i)}(x_i)\right)\\
& + \sum_{x\in\mathcal{X}}\frac{m_\mathrm{f\to v}(x)^l}
{\sum_{x\in\mathcal{X}}m_\mathrm{f\to v}(x)^l}\sum_{y\in\mathcal{Y}}W(y \mid x)\int\prod_{i=1}^l\mathrm{d}\hat{\Phi}(M_\mathrm{f\to v}^{(i)}\mid x)\\
&\quad\cdot\log \left(\sum_{x\in\mathcal{X}} W(y \mid x)\prod_{i=1}^lM_\mathrm{f\to v}^{(i)}(x)\right)\\
& - l\sum_{x\in\mathcal{X}}\frac{m_\mathrm{f\to v}(x)m_\mathrm{v\to f}(x)}{\sum_{x\in\mathcal{X}}m_\mathrm{f\to v}(x)m_\mathrm{v\to f}(x)}
\int\mathrm{d}\Phi(M_\mathrm{v\to f}\mid x)\\
&\quad\cdot \mathrm{d}\hat{\Phi}(M_\mathrm{f\to v}\mid x)
\log \left(\sum_{x\in\mathcal{X}} M_\mathrm{f\to v}(x)M_\mathrm{v\to f}(x)\right)\Bigg\}
\end{align*}
where the saddle point equations are~\eqref{eq:saddle} and
\begin{align*}
&m_\mathrm{f\to v}(x)\hat{\Phi}(M_\mathrm{f\to v}\mid x)\\
&= \frac1r \sum_{k=1}^r 
\sum_{\bm{x}\in\mathcal{X}^r, x_k=x}
\frac{f(\bm{x}) \prod_{i=1,i\ne k}^r m_\mathrm{v\to f}(x_i)}{\sum_{\bm{x}\in\mathcal{X}^r} f(\bm{x}) \prod_{i=1,i\ne k}^rm_\mathrm{v\to f}(x_i)}\\
&\quad\cdot \int\prod_{i=1, i\ne k}^r \mathrm{d}\Phi(M_\mathrm{v\to f}^{(i)}\mid x_i)\\
&\quad\cdot \delta\left(M_\mathrm{f\to v}, \frac{\sum_{\bm{x}\in\mathcal{X}^r\setminus x_k}f(\bm{x})\prod_{i\ne k}M_\mathrm{v\to f}^{(i)}(x_i)}
{\sum_{\bm{x}\in\mathcal{X}^r}f(\bm{x})\prod_{i\ne k}M_\mathrm{v\to f}^{(i)}(x_i)}\right)\\
&\Phi(M_\mathrm{v\to f}\mid x)
 = \sum_{y\in\mathcal{Y}}
W(y \mid x)\int\prod_{i=1}^{l-1}\mathrm{d}\hat{\Phi}(M_\mathrm{f\to v}^{(i)}\mid x)\\
& \quad\cdot \delta\left(M_\mathrm{v\to f}, \frac{W(y\mid x)\prod_{i=1}^{l-1}M_\mathrm{f\to v}^{(i)}(x)}{\sum_{x\in\mathcal{X}}W(y\mid x)\prod_{i=1}^{l-1}M_\mathrm{f\to v}^{(i)}(x)}\right).
\end{align*}
\end{theorem}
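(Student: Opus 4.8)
The plan is to obtain Theorem~\ref{thm:asym} as the $n\to 0$ limit of the preceding lemma, i.e.\ to execute the replica continuation $\lim_{N\to\infty}\frac1N\mathbb{E}[\mathbf{E}[\log Z]]=\lim_{n\to 0}\frac1n\lim_{N\to\infty}\frac1N\log\mathbb{E}[\mathbf{E}[Z^n]]$, where for every real $n$ the inner factor is supplied by the lemma as $\extr_{\theta}\Psi(n,\theta)$, with $\theta:=(\Phi,\hat\Phi,m_\mathrm{v\to f},m_\mathrm{f\to v})$ and $\Psi(n,\theta):=\log Z_\mathrm{v}(n)+\frac{l}{r}\log Z_\mathrm{f}(n)-l\log Z_\mathrm{e}(n)$. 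The first step, which is the standard non-rigorous core of the method, is to interchange $\lim_{n\to0}$ and $\lim_{N\to\infty}$ so that the continuation in $n$ furnished by the lemma may be differentiated at $n=0$.

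First I would verify that $\Psi(0,\theta)\equiv 0$ for every $\theta$, so that $F(n):=\extr_\theta\Psi(n,\theta)$ satisfies $F(0)=0$, as it must since $Z^0=1$. Indeed, at $n=0$ the factors $(\,\cdot\,)^n$ in $Z_\mathrm{v}(n),Z_\mathrm{f}(n),Z_\mathrm{e}(n)$ all reduce to $1$; the integrals against the probability measures $\Phi,\hat\Phi$ are then $1$, the sum $\sum_{y\in\mathcal{Y}}W(y\mid x)=1$ collapses in $Z_\mathrm{v}$, and the normalized weights in each of $Z_\mathrm{v},Z_\mathrm{f},Z_\mathrm{e}$ sum to $1$, giving $Z_\mathrm{v}(0)=Z_\mathrm{f}(0)=Z_\mathrm{e}(0)=1$.

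The value of the theorem is then $F'(0)$. Because $\theta^*(n)$ is by construction a stationary point of $\Psi(n,\cdot)$, the envelope property gives $F'(0)=\partial_n\Psi(n,\theta^*(n))\big|_{n=0}=\partial_n\Psi\bigl(0,\theta^*(0)\bigr)$, so only the explicit $n$-dependence in the $(\,\cdot\,)^n$ factors contributes. Differentiating $A^n$ at $n=0$ produces $\log A$, and since $Z_\mathrm{v}(0)=Z_\mathrm{f}(0)=Z_\mathrm{e}(0)=1$ the chain rule gives $\partial_n\log Z_\bullet(0)=Z_\bullet'(0)$; substituting the explicit $Z_\mathrm{v}(n),Z_\mathrm{f}(n),Z_\mathrm{e}(n)$ reproduces, term by term, the three $\log$-expectation expressions in the statement. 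The saddle-point equations are obtained separately by setting the functional derivatives of $\Psi(n,\cdot)$ with respect to $m_\mathrm{f\to v},m_\mathrm{v\to f},\Phi,\hat\Phi$ to zero and then letting $n\to0$: stationarity in the scalar measures $m_\mathrm{f\to v},m_\mathrm{v\to f}$ collapses to~\eqref{eq:saddle}, while stationarity in $\Phi,\hat\Phi$ yields the two $\delta$-function integral equations, i.e.\ the belief-propagation density evolution, displayed in the theorem.

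The main obstacle, beyond the inherent non-rigour of the $n\to0$/$N\to\infty$ interchange, is the degeneracy of the extremum at $n=0$: since $\Psi(0,\cdot)\equiv0$, every $\theta$ is trivially first-order stationary there, so $\theta^*(0)$ cannot be read off at $n=0$ but must be defined as $\lim_{n\to0}\theta^*(n)$. I therefore expect the delicate step to be expanding the general-$n$ stationarity conditions to first order in $n$ and checking that their $n\to0$ limit is exactly the stated density-evolution equations, rather than naively evaluating at $n=0$; this is also what licenses the envelope step above, since it pins down the point at which $\partial_n\Psi$ is evaluated.
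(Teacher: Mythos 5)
Your proposal is correct and follows the same route the paper takes: the paper derives Theorem~\ref{thm:asym} ``straightforwardly'' from the preceding lemma by the replica continuation $\lim_{n\to0}\frac1n F(n)$ with $F(0)=0$, which is exactly your envelope/differentiation-at-$n=0$ argument, including reading off the saddle-point equations from stationarity in $(\Phi,\hat\Phi,m_\mathrm{v\to f},m_\mathrm{f\to v})$. Your remark on the degeneracy of the extremum at $n=0$ (so that $\theta^*(0)$ must be taken as $\lim_{n\to0}\theta^*(n)$) is a point the paper leaves implicit but is consistent with its treatment.
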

As mentioned above, all $m_\mathrm{f\to v}(x)$ and $m_\mathrm{v\to f}(x)$ in Theorem~\ref{thm:asym} can be replaced by 1/2.
The saddle point equations in Theorem~\ref{thm:asym} are equivalent to the density evolution of belief propagation for asymmetric memoryless channel~\cite{wang2005density}.
Although the RS assumption is sufficient for symmetric channels~\cite{montanari2001glassy},
for asymmetric channels, 1-step replica symmetry breaking (1RSB) should be considered generally~\cite{neri2008gallager}.
The equations for 1RSB solution are obtained straightforwardly from~\eqref{eq:nth} and the 1RSB assumption.
However, numerically solving the saddle point equations for 1RSB solution is much more elaborate.
For LDPC codes, there exists a \textit{frozen solution\/} which is the minimum of the RS solution
with respect to temperature~\cite{montanari2001glassy}, \cite{martin2005random}, \cite{neri2008gallager}.
While the frozen solution can be relatively easily obtained like the RS solution, it is not obvious whether the frozen solution is the correct solution~\cite{neri2008gallager}.

\section{Markov channel}
In this section, we consider LDPC codes on the general Markov channel.
The space of states is denoted by $\mathcal{S}$. 
The state-dependent channel is denoted by $W(y\mid x,s)$, and 
the transition probability of states by $V(s' \mid y, x, s)$, 
for $y\in\mathcal{Y}$, $x\in\mathcal{X}$, and $s,\, s'\in\mathcal{S}$.
The probability distribution of the initial state $s\in\mathcal{S}$ is denoted by $V_0(s)$.
When the transition probability is independent of $y$,
we call the channel an intersymbol-interference channel.
When the transition probability is independent of $y$ and $x$,
we call the channel a finite-state Markov channel.
The a posteriori probability of a codeword $\bm{x}\in\mathcal{X}^N$ given an output $\bm{y}\in\mathcal{Y}^N$ is
\begin{align*}
p(\bm{x}\mid \bm{y}) &:= \frac1Z\sum_{\bm{s}\in\mathcal{S}^N}\prod_{a} f(\bm{x}_{\partial a})
\prod_{i=1}^{N} W(y_i \mid x_i, s_i)\\
&\quad\cdot V_0(s_1)\prod_{i=1}^{N-1}V(s_{i+1}\mid y_i, x_i, s_i)\\
Z &:= \sum_{\bm{x}\in\mathcal{X}^N}\sum_{\bm{s}\in\mathcal{S}^N} \prod_{a}f(\bm{x}_{\partial a})\prod_{i=1}^{N} W(y_i \mid x_i, s_i)\\
&\quad\cdot V_0(s_1)\prod_{i=1}^{N-1}V(s_{i+1}\mid y_i, x_i, s_i).
\end{align*}
Similarly to the previous subsection, let $\mathbf{E}[\cdot]$ denote the expectation with respect to $\bm{y}\in\mathcal{Y}^N$ obeying
\begin{align*}
p(\bm{y}) &:= \frac1{Z_0}\sum_{\bm{x}\in\mathcal{X}^N}
\sum_{\bm{s}\in\mathcal{S}^N}\prod_{a} f(\bm{x}_{\partial a}) \prod_{i=1}^{N} W(y_i \mid x_i, s_i)\\
&\quad\cdot V_0(s_1)\prod_{i=1}^{N-1}V(s_{i+1}\mid y_i, x_i, s_i).
\end{align*}
As in the previous subsection, 
we consider~\eqref{eq:FE} with 
$Z_0$ replaced by $2^{NR}$.
Then, one obtains
\begin{align*}
\mathbf{E}[Z^n] &= \frac1{2^{NR}}\sum_{\bm{x}\in(\mathcal{X}^{n+1})^N}\sum_{\bm{s}\in(\mathcal{S}^{n+1})^N}
\left(\prod_{a}\prod_{k=0}^nf(\bm{x}_{\partial a}^{(k)})\right)\prod_{k=0}^nV_0(s_1^{(k)})\\
&\quad\cdot\prod_{i=1}^{N}\left(\sum_{y\in\mathcal{Y}}\prod_{k=0}^n W(y \mid x_{i-1}^{(k)}, s_{i-1}^{(k)}) V(s_{i}^{(k)}\mid y, x_{i-1}^{(k)}, s_{i-1}^{(k)})\right).
\end{align*}
Let $\mathcal{T}_n\subseteq \mathcal{S}^{n+1}$ be the largest set such that for any $\bm{s}_2\in\mathcal{T}_n$ 
\begin{equation*}
\sum_{\bm{s}_1\in\mathcal{T}_n}\sum_{y\in\mathcal{Y}}\prod_{k=0}^n\left(\sum_{x_2\in\mathcal{X}} W(y\mid x_2,s_2^{(k)})V(s_1^{(k)}\mid y,x_2,s_2^{(k)})\right) > 0.
\end{equation*}
In order to use Sanov's theorem~\cite{dembo2009large} 
from
the method of types for Markov chain~\cite{whittle1955some,billingsley1961statistical,csiszar1987conditional},
we assume that the Markov chain on $\mathcal{T}_n$ defined by
the transition probabilities
\begin{equation*}
Q_n(\bm{s}_1\mid \bm{s}_2) :\propto \sum_{y\in\mathcal{Y}}\prod_{k=0}^n\left(\sum_{x_2\in\mathcal{X}}W(y\mid x_2,s_2^{(k)})V(s_1^{(k)}\mid y,x_2,s_2^{(k)})\right)
\end{equation*}
is irreducible for each $n\in\{0,1,2,\dotsc\}$.
Then, we obtain the following maximization problem like LDPC codes for memoryless channel~\cite{mori2011connection},
\begin{align}
&\lim_{N\to\infty}\frac1N\log \mathbb{E}[\mathbf{E}[Z^n]]=\sup\Bigg\{
H(X_1,S_1\mid X_2,S_2) - lH(X_1,S_1)\nonumber\\
&\quad  + \frac{l}{r} H(U_1,\dotsc,U_r, T_1,\dotsc,T_r)
+\frac{l}{r}\left\langle \log \prod_{k=0}^nf(\bm{U}^{(k)})\right\rangle\nonumber\\
&\quad +\left\langle\log\left(\sum_{y\in\mathcal{Y}}\prod_{k=0}^nW(y\mid X_2^{(k)},S_2^{(k)})V(S_1^{(k)}\mid y,X_2^{(k)},S_2^{(k)})\right)\right\rangle
\Bigg\}\nonumber\\
&\quad-R.
\label{eq:type}
\end{align}
Here, $X_1$ and $X_2$ are random variables on $\mathcal{X}^{n+1}$.
$S_1$ and $S_2$ are random variables on $\mathcal{T}_n$ or equivalently on $\mathcal{S}^{n+1}$.
$U_i$ and $T_i$ are random variables on $\mathcal{X}^{n+1}$ and $\mathcal{S}^{n+1}$, respectively, for $i\in\{1,\dotsc,r\}$.
The notation $\langle\cdot\rangle$ denotes the expectation with respect to the random variables.
The supremum is taken with respect to $(X_1,S_1,X_2,S_2)$ and $(U_1,\dotsc,U_r,T_1,\dotsc,T_r)$ on the following conditions
\begin{itemize}
\item $(X_1, S_1)$ and $(X_2, S_2)$ have the same distribution
\item $(X_1, S_1)$ and $(U_K, T_K)$ have the same distribution
\end{itemize}
where $K$ denotes the uniform random variable on a set $\{1,\dotsc,r\}$.
\begin{figure*}[!b]
\hrulefill
\setcounter{equation}{9}
\begin{align}
&\extr_{\Phi,\hat{\Phi},\Psi,\hat{\Psi}}\Bigg\{
\sum_{(x_1,x_2,s_2)\in\mathcal{X}\times\mathcal{X}\times\mathcal{S}}
m_\mathrm{L\to s}(s_2) \frac{m_\mathrm{f\to v}(x_1)^l}{\sum_{x_1\in\mathcal{X}}m_\mathrm{f\to v}(x_1)^l}
\frac{m_\mathrm{f\to v}(x_2)^l}{\sum_{x_2\in\mathcal{X}}m_\mathrm{f\to v}(x_2)^l}
\sum_{(y,s_1)\in\mathcal{Y}\times\mathcal{S}}
W(y\mid x_2,s_2)V(s_1\mid y,x_2,s_2) \nonumber\\
&\quad\cdot\int\mathrm{d}\Psi(M_\mathrm{R\to v}\mid x_1,s_1) \mathrm{d}\hat{\Psi}(M_\mathrm{L\to s}\mid s_2)
\prod_{i=1}^l \mathrm{d}\hat{\Phi}(M_\mathrm{f\to v}^{(1,i)}\mid x_1)
\prod_{i=1}^l \mathrm{d}\hat{\Phi}(M_\mathrm{f\to v}^{(2,i)}\mid x_2)\nonumber\\
&\quad\cdot\log\left(\sum_{(x_1,s_1,x_2,s_2)\in\mathcal{X}\times\mathcal{S}\times\mathcal{X}\times\mathcal{S}}W(y\mid x_2,s_2)V(s_1\mid y,x_2,s_2) 
M_\mathrm{R\to v}(x_1,s_1) \left(\prod_{i=1}^lM_\mathrm{f\to v}^{(1,i)}(x_1)\right)
M_\mathrm{L\to s}(s_2) \left(\prod_{i=1}^lM_\mathrm{f\to v}^{(2,i)}(x_2)\right)
\right)\nonumber\\
&\quad-
\sum_{(x,s)\in(\mathcal{X}\times\mathcal{S})}
m_\mathrm{L\to s}(s)\frac{m_\mathrm{f\to v}(x)^l}{\sum_{x\in\mathcal{X}}m_\mathrm{f\to v}(x)^l}\nonumber\\
&\quad\cdot\int\mathrm{d}\Psi(M_\mathrm{R\to v}\mid x,s) \mathrm{d}\hat{\Psi}(M_\mathrm{L\to s}\mid s)
\prod_{i=1}^l \mathrm{d}\hat{\Phi}(M_\mathrm{f\to v}^{(i)}\mid x)
\log\left(\sum_{(x,s)\in\mathcal{X}\times\mathcal{S}} M_\mathrm{R\to v}(x,s)M_\mathrm{L\to s}(s)\prod_{i=1}^l M_\mathrm{f\to v}^{(i)}(x)\right)\nonumber\\
&\quad +\frac{l}{r} \sum_{x\in\mathcal{X}^r}\frac{f(\bm{x}) \prod_{i=1}^rm_\mathrm{v\to f}(x_i)}
{\sum_{\bm{x}\in\mathcal{X}^r} f(\bm{x}) \prod_{i=1}^rm_\mathrm{v\to f}(x_i)}
\int\prod_{i=1}^r\mathrm{d}\Phi(M_\mathrm{v\to f}^{(i)}\mid x_i)
\log \left(\sum_{\bm{x}\in\mathcal{X}^r} f(\bm{x}) \prod_{i=1}^r M_\mathrm{v\to f}^{(i)}(x_i)\right)\nonumber\\
&\quad - l\sum_{x\in\mathcal{X}}\frac{m_\mathrm{f\to v}(x)m_\mathrm{v\to f}(x)}{\sum_{x\in\mathcal{X}}m_\mathrm{f\to v}(x)m_\mathrm{v\to f}(x)}
\int\mathrm{d}\Phi(M_\mathrm{v\to f}\mid x)\mathrm{d}\hat{\Phi}(M_\mathrm{f\to v}\mid x)
\log \left(\sum_{x\in\mathcal{X}} M_\mathrm{f\to v}(x)M_\mathrm{v\to f}(x)\right)\Bigg\}\label{eq:RS_mem}
\end{align}
\setcounter{equation}{8}
\end{figure*}
By using the variational method~\cite{yedidia2005constructing}, \cite{mori2011connection}, we obtain the following lemma.
\begin{lemma}\label{lem:mem-nth}
For $n\in\{0,1,2,\dotsc\}$,
\begin{align*}
&\lim_{N\to\infty}\frac1N\log\mathbb{E}[Z^n]
=\max_{(m_\mathrm{v\to f}, m_\mathrm{v\to f}, m_\mathrm{R\to v}, m_\mathrm{L\to s})\in\mathcal{R}}\bigg\{\\
&\quad \log Z_\mathrm{w} - \log Z_\mathrm{v} + \frac{l}{r}\log Z_\mathrm{f} - l\log Z_\mathrm{e}\bigg\}
-R
\end{align*}
where $\mathcal{R}$ denotes the set of saddle points of the function for which the maximum is taken, and where
\begin{align*}
Z_\mathrm{w} &:=
\sum_{\bm{x}_1,\bm{s}_1,\bm{x}_2,\bm{s}_2}
\left(\sum_{y\in\mathcal{Y}}\prod_{k=0}^nW(y\mid \bm{x}_2^{(k)},\bm{s}_2^{(k)})V(\bm{s}_1^{(k)}\mid y,\bm{x}_2^{(k)},\bm{s}_2^{(k)})\right)\\
&\quad\cdot m_\mathrm{R\to v}(\bm{x}_1,\bm{s}_1) m_\mathrm{f\to v}(\bm{x}_1)^l
m_\mathrm{L\to s}(\bm{s}_2) m_\mathrm{f\to v}(\bm{x}_2)^l \\
Z_\mathrm{v} &:=
\sum_{\bm{x},\bm{s}}  m_\mathrm{R\to v}(\bm{x},\bm{s})m_\mathrm{L\to s}(\bm{s})m_\mathrm{f\to v}(\bm{x})^l\\
Z_\mathrm{f} &:=
\sum_{\bm{x}\in(\mathcal{X}^{n+1})^r} \prod_{k=0}^n f(\bm{x}^{(k)})\prod_{i=1}^r m_\mathrm{v\to f}(\bm{x}_i)\\
Z_\mathrm{e} &:= \sum_{\bm{x}} m_\mathrm{f\to v}(\bm{x}) m_\mathrm{v\to f}(\bm{x}).
\end{align*}
The saddle point equations are
\begin{align*}
m_\mathrm{v\to f}(\bm{x}) &\propto \left(\sum_{\bm{s}} m_\mathrm{R \to v}(\bm{x},\bm{s})m_\mathrm{L\to s}(\bm{s})\right)m_\mathrm{f\to v}(\bm{x})^{l-1}\\
m_\mathrm{f\to v}(\bm{x})&\propto \sum_{i=1}^r\sum_{\substack{\bm{z}\in(\mathcal{X}^{n+1})^r\\\bm{z}_i=\bm{x}}}\prod_{k=0}^nf(\bm{z}^{(k)})
\prod_{j=1, j\ne i}^r m_\mathrm{v\to f}(\bm{z}_j)\\
m_\mathrm{R\to v}(\bm{x},\bm{s}) &\propto \sum_{\bm{x}_1, \bm{s}_1}
\left(\sum_{y\in\mathcal{Y}}\prod_{k=0}^nW(y\mid \bm{x}^{(k)},\bm{s}^{(k)})V(\bm{s}_1^{(k)}\mid y,\bm{x}^{(k)},\bm{s}^{(k)})\right)\\
&\quad\cdot m_\mathrm{R\to v}(\bm{x}_1,\bm{s}_1) m_\mathrm{f\to v}(\bm{x}_1)^l\\
m_\mathrm{L\to s}(\bm{s}) &\propto \sum_{\bm{x}_2, \bm{s}_2}
\left(\sum_{y\in\mathcal{Y}}\prod_{k=0}^nW(y\mid \bm{x}_2^{(k)},\bm{s}_2^{(k)})V(\bm{s}^{(k)}\mid y,\bm{x}_2^{(k)},\bm{s}_2^{(k)})\right)\\
&\quad\cdot m_\mathrm{L\to s}(\bm{s}_2) m_\mathrm{f\to v}(\bm{x}_2)^l.
\end{align*}
Here, the domains of sums $\bm{x}, \bm{x}_1, \bm{x}_2 \in \mathcal{X}^{n+1}$, $\bm{s}, \bm{s}_1, \bm{s}_2 \in\mathcal{S}^{n+1}$ are omitted.
\end{lemma}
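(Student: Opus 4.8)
The plan is to treat the right-hand side of~\eqref{eq:type} as a constrained variational problem over beliefs (local marginal laws) and to dualize the constraints by the variational method of~\cite{yedidia2005constructing}, exactly as in the memoryless derivation of~\cite{mori2011connection}. I would read the three entropic contributions as the region entropies of a Bethe-type free energy: the term $H(X_1,S_1\mid X_2,S_2)$ together with its energy term involving $W$ and $V$ is the contribution of the state-transition (channel) factor, the term $\frac{l}{r}H(U_1,\dots,U_r,T_1,\dots,T_r)$ together with $\frac{l}{r}\langle\log\prod_k f(\bm U^{(k)})\rangle$ is the contribution of the check factor, and $-lH(X_1,S_1)$ is the overcounted variable-node entropy. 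The supremum is taken over the joint laws of $(X_1,S_1,X_2,S_2)$ and of $(U_1,T_1,\dots,U_r,T_r)$ subject to normalization, to the stationarity constraint that $(X_1,S_1)$ and $(X_2,S_2)$ share a common law $\rho$, and to the matching constraint that the law of $(U_K,T_K)$ with $K$ uniform (the mixed leg marginal of the check belief) also equals $\rho$. The restriction of the state alphabet to $\mathcal{T}_n$ is already encoded in the support of these beliefs.

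First I would introduce Lagrange multiplier functions for each equality constraint and form the Lagrangian. The multiplier enforcing consistency of the check legs with the variable belief is to be identified with $\log m_\mathrm{v\to f}$ and its conjugate with $\log m_\mathrm{f\to v}$, while the multipliers coupling the channel factor to the variable node along the trellis are to be identified with $\log m_\mathrm{R\to v}$ (forward, carrying $(x,s)$) and $\log m_\mathrm{L\to s}$ (backward, carrying $s$ only). The key analytic step is the inner maximization over the beliefs: each region objective is a concave entropy plus a linear energy-and-multiplier term, so the stationary belief is an exponential-family (Gibbs) law and the optimal value is a log-normalizer. I expect these normalizers to be exactly $Z_\mathrm{w}$, $Z_\mathrm{f}$, $Z_\mathrm{v}$, and $Z_\mathrm{e}$, so that reinserting the stationary beliefs reproduces $\log Z_\mathrm{w}-\log Z_\mathrm{v}+\frac{l}{r}\log Z_\mathrm{f}-l\log Z_\mathrm{e}$, where $Z_\mathrm{e}$ is the variable–check edge overlap that keeps the messages normalized.

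It then remains to impose stationarity of the dual with respect to the four message families. Differentiating in $m_\mathrm{v\to f}$ and $m_\mathrm{f\to v}$ returns the two combinatorial saddle point equations (the check-node convolution and the variable-node product); differentiating in $m_\mathrm{R\to v}$ and $m_\mathrm{L\to s}$ returns the forward and backward recursions on the trellis listed as the last two saddle point equations. The simultaneous solutions constitute the set $\mathcal{R}$ of saddle points, and the claimed value is attained there.

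The hard part will be the correct handling of the Markov-chain stationarity constraint, which has no analogue in the memoryless case. Unlike a purely local consistency condition, the requirement that $(X_1,S_1)$ and $(X_2,S_2)$ have a common law $\rho$ is a fixed-point condition on the chain induced by $W$ and $V$, and it is precisely this that splits the channel coupling into the two directional messages: $m_\mathrm{R\to v}$ and $m_\mathrm{L\to s}$ emerge as the (unnormalized) right and left Perron eigenvectors of the trellis transfer operator, which is why irreducibility of $Q_n$ is assumed. I would need to verify that the multipliers for this constraint indeed give the stated forward/backward eigen-equations, and to track the counting numbers $l$ and $l/r$ carefully so that the variable belief appears with the correct multiplicity and the check-to-variable messages enter as the $l$-fold product $m_\mathrm{f\to v}(\bm x)^l$ in both $Z_\mathrm{w}$ and $Z_\mathrm{v}$. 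A secondary point to check is that, because the Bethe entropy is not globally concave, the dualization yields genuine saddle points rather than a maximum, which is why the statement quantifies over $\mathcal{R}$ rather than over an unconstrained optimum.
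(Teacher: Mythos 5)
Your proposal follows the same route the paper itself indicates for this lemma: dualize the constrained entropy maximization~\eqref{eq:type} by the variational method of~\cite{yedidia2005constructing}, with the Lagrange multipliers for the consistency and stationarity constraints becoming the four message families and the stationary region beliefs yielding the log-normalizers $Z_\mathrm{w}$, $Z_\mathrm{v}$, $Z_\mathrm{f}$, $Z_\mathrm{e}$. The paper omits the details for lack of space, but your identification of $m_\mathrm{R\to v}$ and $m_\mathrm{L\to s}$ as the eigenvector fixed points of the message-tilted transfer operator (requiring irreducibility of $Q_n$) is precisely the ingredient that distinguishes this derivation from the memoryless case in~\cite{mori2011connection}, so the approach is essentially the intended one.
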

Although a proof of this lemma is omitted for lack of space, the derivation is similar to the memoryless case in~\cite{mori2011connection}.

\begin{lemma}
For $n=0$, $m_\mathrm{R\to v}(x,s)$ is uniform.
Hence, the stationary condition is given by\/~\eqref{eq:saddle} and
\begin{align}
m_\mathrm{L\to s}(s) &\propto \sum_{x_2, s_2}
\left(\sum_{y\in\mathcal{Y}}W(y\mid x_2,s_2)V(s\mid y,x_2,s_2)\right)\nonumber\\
&\quad\cdot m_\mathrm{L\to s}(s_2) m_\mathrm{f\to v}(x_2)^l.
\label{eq:saddleL}
\end{align}
\end{lemma}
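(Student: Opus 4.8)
The plan is to specialise the saddle point equations of Lemma~\ref{lem:mem-nth} to $n=0$ and to recognise the equation for $m_\mathrm{R\to v}$ as a linear Perron--Frobenius eigenvalue problem whose leading eigenvector is the uniform one. For $n=0$ the replica index $k$ takes only the value $0$, so $\bm{x}=x\in\mathcal{X}$, $\bm{s}=s\in\mathcal{S}$, and every product $\prod_{k=0}^{n}$ collapses to a single factor. The key observation is that the weight appearing in the $m_\mathrm{R\to v}$ equation becomes
\[
P(s_1\mid x,s):=\sum_{y\in\mathcal{Y}}W(y\mid x,s)\,V(s_1\mid y,x,s),
\]
which is an honest stochastic kernel: summing over $s_1$ first and then over $y$ gives $\sum_{s_1}P(s_1\mid x,s)=\sum_{y}W(y\mid x,s)=1$. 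This is precisely the feature that fails for $n\ge 1$, where the shared output $y$ couples the replicas and the analogous weight no longer sums to one; hence the uniform solution is special to $n=0$.

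Next I would regard $m_\mathrm{f\to v}$ as given by the other equations and rewrite the $n=0$ relation
\[
m_\mathrm{R\to v}(x,s)\propto\sum_{x_1,s_1}P(s_1\mid x,s)\,m_\mathrm{f\to v}(x_1)^{l}\,m_\mathrm{R\to v}(x_1,s_1)
\]
as a linear eigenvalue problem $\mathcal{L}\,m_\mathrm{R\to v}=\lambda\,m_\mathrm{R\to v}$ on functions over $\mathcal{X}\times\mathcal{S}$, with nonnegative kernel $K\bigl((x,s),(x_1,s_1)\bigr)=P(s_1\mid x,s)\,m_\mathrm{f\to v}(x_1)^{l}$. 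Since the weight does not depend on $x_1$, the map is genuinely linear in $m_\mathrm{R\to v}$. Applying $\mathcal{L}$ to the constant function and using $\sum_{s_1}P(s_1\mid x,s)=1$ yields $\mathcal{L}\mathbf{1}=\bigl(\sum_{x}m_\mathrm{f\to v}(x)^{l}\bigr)\mathbf{1}$, so the uniform function is a strictly positive eigenvector with eigenvalue $\sum_{x}m_\mathrm{f\to v}(x)^{l}$.

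To force $m_\mathrm{R\to v}$ to be this uniform solution I would invoke Perron--Frobenius together with the irreducibility assumed in the paper. Since $m_\mathrm{f\to v}(x)>0$ for every $x$ (indeed $m_\mathrm{f\to v}(x)=1/2$), the support of $K$ is governed by that of $P$, and $K$ inherits irreducibility on $\mathcal{X}\times\mathcal{S}$ from the $n=0$ chain $Q_0$ on $\mathcal{T}_0$, the $x$-direction being fully connected because every $x_1$ is reachable. The physically relevant saddle point is the dominant-eigenvalue solution, and by Perron--Frobenius an irreducible nonnegative kernel has a unique strictly positive eigenvector up to scaling, namely the leading one; having exhibited the uniform function as a strictly positive eigenvector, it must coincide with this leading eigenvector. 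Hence $m_\mathrm{R\to v}(x,s)$ is uniform.

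Finally, substituting the uniform $m_\mathrm{R\to v}$ back into the remaining stationary conditions of Lemma~\ref{lem:mem-nth} at $n=0$ gives the claimed reduction. In the $m_\mathrm{v\to f}$ equation the factor $\sum_{s}m_\mathrm{R\to v}(x,s)\,m_\mathrm{L\to s}(s)$ becomes independent of $x$, leaving $m_\mathrm{v\to f}(x)\propto m_\mathrm{f\to v}(x)^{l-1}$; the $m_\mathrm{f\to v}$ equation collapses, via its single factor $f$ on $\mathcal{X}^{r}$, to the second line of~\eqref{eq:saddle}; and the $m_\mathrm{L\to s}$ equation is exactly~\eqref{eq:saddleL}. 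I expect the main obstacle to be the selection step: confirming that the irreducibility hypothesis stated for $Q_n$ transfers to the combined kernel $K$ on $\mathcal{X}\times\mathcal{S}$ and that the relevant saddle point is indeed the Perron eigenvector, so that the uniform solution is forced rather than merely admissible.
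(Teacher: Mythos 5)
Your proposal is correct, but it reaches the conclusion by a genuinely different route than the paper. The paper's own proof goes back to the large-deviations formulation~\eqref{eq:type}: it first argues that for $n=0$ the random variables $X_1$ and $S_1$ must be independent, deduces that $m_\mathrm{R\to v}(x,s)$ factorizes as a product measure, identifies the $s$-marginal as the (unique, by irreducibility of $Q_0$) right eigenvector of the stochastic matrix $Q_0$ with eigenvalue $1$, and then concludes that the $x$-marginal is uniform as well. You instead stay entirely inside the saddle point equations of Lemma~\ref{lem:mem-nth}, exploit the two structural facts that the $n=0$ weight $P(s_1\mid x,s)=\sum_y W(y\mid x,s)V(s_1\mid y,x,s)$ is row-stochastic in $s_1$ and that the kernel does not depend on $x_1$, and apply Perron--Frobenius to the joint kernel on $\mathcal{X}\times\mathcal{S}$. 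Your version is more self-contained: it does not need the independence claim, which the paper defers to the omitted proof of Lemma~\ref{lem:mem-nth}. The obstacle you flag at the end is more benign than you suggest: since $m_\mathrm{R\to v}$ is a nonnegative measure, any nonnegative nonzero eigenvector of an irreducible nonnegative kernel \emph{is} the Perron eigenvector, so no separate ``selection of the dominant saddle point'' is needed; and irreducibility of the joint kernel does follow from that of $Q_0$ on $\mathcal{T}_0$ because the $x_1$-dependence is a strictly positive factor (so the $x$-coordinate mixes completely in one step) while $\sum_{s_1}P(s_1\mid x,s)=1$ guarantees every state has an outgoing transition feeding into the $Q_0$-communicating class. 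The only residual bookkeeping in either proof is restricting attention to $\mathcal{T}_0$ rather than all of $\mathcal{S}$, which the paper also glosses over. Your closing substitution into the remaining $n=0$ stationarity conditions to recover~\eqref{eq:saddle} and~\eqref{eq:saddleL} is exactly right.
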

\begin{proof}
By considering~\eqref{eq:type},
one can confirm that $X_1$ and $S_1$ (also $X_2$ and $S_2$) should be independent for $n=0$.
As a consequence, it turns out that $m_\mathrm{R\to v}(x, s)$ is also an independent measure
(This part relates to the proof of Lemma~\ref{lem:mem-nth}. Hence, the proof is omitted for lack of space).
Then, $\sum_{x\in\mathcal{X}}m_\mathrm{R\to v}(x,s)$ should be uniform,
which is the unique right eigenvector corresponding to the eigenvalue 1 of the stochastic matrix of irreducible Markov chain $Q_0$.
Then, one straightforwardly obtains that $\sum_{s\in\mathcal{S}}m_\mathrm{R\to v}(x,s)$ is also uniform.
\end{proof}
For finite-state Markov channels,~\eqref{eq:saddleL} implies that
$m_\mathrm{L\to s}(s)$ is the stationary distribution of the Markov chain $V(s_1\mid s_2)$.

\begin{figure*}[!b]
\hrulefill
\setcounter{equation}{12}
\begin{equation}\label{eq:saddle_memory}
\begin{split}
m_\mathrm{f\to v}(x)\hat{\Phi}(M_\mathrm{f\to v}\mid x)
&= \frac1r \sum_{k=1}^r 
\sum_{\bm{x}\in\mathcal{X}^r, x_k=x}
\frac{f(\bm{x}) \prod_{i=1,i\ne k}^r m_\mathrm{v\to f}(x_i)}{\sum_{\bm{x}\in\mathcal{X}^r} f(\bm{x}) \prod_{i=1,i\ne k}^rm_\mathrm{v\to f}(x_i)}
\int\prod_{i=1, i\ne k}^r \mathrm{d}\Phi(M_\mathrm{v\to f}^{(i)}\mid x_i)\\
&\quad\cdot \delta\left(M_\mathrm{f\to v}, \frac{\sum_{\bm{x}\in\mathcal{X}^r\setminus x_k}f(\bm{x})\prod_{i\ne k}M_\mathrm{v\to f}(x_i)}
{\sum_{\bm{x}\in\mathcal{X}^r}f(\bm{x})\prod_{i\ne k}M_\mathrm{v\to f}(x_i)}\right)\\
\Phi(M_\mathrm{v\to f}\mid x)
 &= \sum_{s\in\mathcal{S}} m_\mathrm{L\to s}(s)
\int \mathrm{d}\Psi(M_\mathrm{R\to v}\mid x,s)\mathrm{d}\hat{\Psi}(M_\mathrm{L\to s}\mid s)
\prod_{i=1}^{l-1}\mathrm{d}\hat{\Phi}(M_\mathrm{f\to v}^{(i)}\mid x)\\
&\quad\cdot\delta\left(M_\mathrm{v\to f}, \frac{\left(\sum_{s\in\mathcal{S}}M_\mathrm{R\to v}(x,s)M_\mathrm{L\to s}(s)\right)\prod_{i=1}^{l-1}M_\mathrm{f\to v}^{(i)}(x)}
{\sum_{(x,s)\in\mathcal{X}\times\mathcal{S}} M_\mathrm{R\to v}(x,s)M_\mathrm{L\to s}(s)\prod_{i=1}^{l-1}M_\mathrm{f\to v}^{(i)}(x)}\right)\\
\Psi(M_\mathrm{R\to v}\mid x,s) &=
\sum_{(y,x_1,s_1)\in\mathcal{Y}\times\mathcal{X}\times\mathcal{S}}
\frac{m_\mathrm{f\to v}(x_1)^l}{\sum_{x\in\mathcal{X}}m_\mathrm{f\to v}(x)^l}
W(y\mid x,s)V(s_1\mid y,x,s) 
 \int\mathrm{d}\Psi(M_\mathrm{R\to v}'\mid x_1,s_1)\prod_{i=1}^l\mathrm{d}\hat{\Phi}(M_\mathrm{f\to v}^{(i)}\mid x_1)\\
&\quad \cdot \delta\left(M_\mathrm{R\to v},
\frac{\sum_{(x_1,s_1)\in\mathcal{X}\times\mathcal{S}} M_\mathrm{R\to v}'(x_1,s_1)\left(\prod_{i=1}^lM_\mathrm{f\to v}^{(i)}(x_1)\right) W(y\mid x,s)V(s_1\mid y,x,s)}
{\sum_{(x_1,s_1,x_2,s_2)\in\mathcal{X}\times\mathcal{S}\times\mathcal{X}\times\mathcal{S}}
 M_\mathrm{R\to v}'(x_1,s_1)\left(\prod_{i=1}^lM_\mathrm{f\to v}^{(i)}(x_1)\right) W(y\mid x_2,s_2)V(s_1\mid y,x_2,s_2)}
\right)\\
m_\mathrm{L\to s}(s)\hat{\Psi}(M_\mathrm{L\to s}\mid s) &=
\sum_{(x_2,s_2)\in\mathcal{X}\times\mathcal{S}} m_\mathrm{L\to s}(s_2) \frac{m_\mathrm{f\to v}(x_2)^l}{\sum_{x_2\in\mathcal{X}}m_\mathrm{f\to v}(x_2)^l}
\sum_{y\in\mathcal{Y}} W(y\mid x_2, s_2) V(s\mid y,x_2,s_2)
\int\mathrm{d}\hat{\Psi}(M_\mathrm{L\to s}'\mid s_2)\\
&\quad\cdot \prod_{i=1}^l\mathrm{d}\hat{\Phi}(M_\mathrm{f\to v}^{(i)}\mid x_2)
 \delta\left(M_\mathrm{L\to s}, 
\frac{\sum_{(x_2,s_2)\in\mathcal{X}\times\mathcal{S}} M_\mathrm{L\to s}'(s_2)\left(\prod_{i=1}^lM_\mathrm{f\to v}^{(i)}(x_2)\right) W(y\mid x_2,s_2)V(s\mid y,x_2,s_2)}
{\sum_{(x_2,s_2)\in\mathcal{X}\times\mathcal{S}} M_\mathrm{L\to s}'(s_2)\left(\prod_{i=1}^lM_\mathrm{f\to v}^{(i)}(x_2)\right) W(y\mid x_2,s_2)}
 \right).
\end{split}
\end{equation}
\setcounter{equation}{10}
\end{figure*}

The RS assumption for this problem is
\begin{align*}
m_\mathrm{v\to f}(\bm{x}) &= m_\mathrm{v\to f}(x_0)\int \prod_{i=1}^nM_\mathrm{v\to f}(x_i)\mathrm{d}\Phi(M_\mathrm{v\to f}\mid x_0)\\
m_\mathrm{f\to v}(\bm{x}) &= m_\mathrm{f\to v}(x_0)\int \prod_{i=1}^nM_\mathrm{f\to v}(x_i)\mathrm{d}\hat{\Phi}(M_\mathrm{f\to v}\mid x_0)\\
m_\mathrm{R\to v}(\bm{x},\bm{s}) &= m_\mathrm{R\to v}(x_0,s_0)\int \prod_{i=1}^nM_\mathrm{R\to v}(x_i,s_i)\\
&\quad\cdot\mathrm{d}\Psi(M_\mathrm{R\to v}\mid x_0,s_0)\\
m_\mathrm{L\to s}(\bm{s}) &= m_\mathrm{L\to s}(s_0)\int \prod_{i=1}^nM_\mathrm{L\to s}(s_i)\mathrm{d}\hat{\Psi}(M_\mathrm{L\to s}\mid s_0).
\end{align*}
Similarly to the previous subsection, we obtain the RS solution
as follows, which is the main result of this paper.
\begin{theorem}\label{thm:mem}
The RS solution is given by\/~\eqref{eq:RS_mem}.
The saddle point equations are\/~\eqref{eq:saddle}, \eqref{eq:saddleL} and \eqref{eq:saddle_memory}.
\end{theorem}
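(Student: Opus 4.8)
The plan is to mirror the derivation of Theorem~\ref{thm:asym} in the memoryless case, now starting from Lemma~\ref{lem:mem-nth} in place of~\eqref{eq:nth}. First I would insert the RS ansatz for $m_\mathrm{v\to f}$, $m_\mathrm{f\to v}$, $m_\mathrm{R\to v}$, and $m_\mathrm{L\to s}$ into the four partition-function-like quantities $Z_\mathrm{w}$, $Z_\mathrm{v}$, $Z_\mathrm{f}$, $Z_\mathrm{e}$ of Lemma~\ref{lem:mem-nth}. Since each RS measure is a product of one distinguished ($k=0$) scalar factor and $n$ identically distributed functional factors drawn from $\Phi$, $\hat{\Phi}$, $\Psi$, or $\hat{\Psi}$, the sums over the $n+1$ replicas factorize into a scalar part carried by the $k=0$ replica and an $n$-fold functional integral over the remaining replicas. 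This turns each $Z$ into a quantity $Z_\bullet(n)$ in which the replica index appears only through an explicit power $(\cdots)^n$ and through the number of functional integrations; these forms admit analytic continuation to real $n$, which is the (non-rigorous) replica continuation step.

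Next I would normalize each $Z_\bullet(n)$ so that $Z_\bullet(0)=1$, pulling the scalar $k=0$ contributions into the weights $m_\mathrm{L\to s}(s)$ and $m_\mathrm{f\to v}(x)^l/\sum_{x}m_\mathrm{f\to v}(x)^l$ that stand in front of the integrals in~\eqref{eq:RS_mem}. The $n=0$ scalar contribution equals $R$ by the rate identity~\eqref{eq:asump}, which cancels the $-R$ of Lemma~\ref{lem:mem-nth}; here I would invoke the preceding lemma (uniformity of $m_\mathrm{R\to v}$ and stationarity of $m_\mathrm{L\to s}$ at $n=0$, i.e.~\eqref{eq:saddle} and~\eqref{eq:saddleL}) to fix the scalar measures consistently. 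With $Z_\bullet(0)=1$ I then evaluate $\lim_{n\to 0}(1/n)\log Z_\bullet(n)=\frac{\mathrm{d}}{\mathrm{d}n}Z_\bullet(n)\big|_{n=0}$; differentiating each $(\cdots)^n$ factor at $n=0$ replaces it by its logarithm, producing the $\log(\cdots)$ integrands of~\eqref{eq:RS_mem}. Assembling $\log Z_\mathrm{w}-\log Z_\mathrm{v}+\frac{l}{r}\log Z_\mathrm{f}-l\log Z_\mathrm{e}$ term by term then reproduces the four groups of terms in~\eqref{eq:RS_mem}.

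Finally, the saddle-point equations follow by imposing stationarity of~\eqref{eq:RS_mem} under functional variation of $\Phi$, $\hat{\Phi}$, $\Psi$, $\hat{\Psi}$ together with the scalar measures, exactly as in~\cite{mori2011connection,yedidia2005constructing}. Varying with respect to each functional distribution forces the delta-function update rules of~\eqref{eq:saddle_memory}, while varying the scalar measures reproduces~\eqref{eq:saddle} and~\eqref{eq:saddleL}; the $\extr$ in~\eqref{eq:RS_mem} is precisely this saddle condition.

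I expect the main obstacle to be the bookkeeping of $Z_\mathrm{w}$, which simultaneously couples both variable-node populations $m_\mathrm{f\to v}(\bm{x}_1)^l$ and $m_\mathrm{f\to v}(\bm{x}_2)^l$, the right-going message $m_\mathrm{R\to v}$, and the left-going message $m_\mathrm{L\to s}$, all tied together through $W$ and $V$ and summed over the hidden output $y$ and the states. Keeping the two independent families of functional integrals $M_\mathrm{f\to v}^{(1,i)}$ and $M_\mathrm{f\to v}^{(2,i)}$ distinct, and correctly pairing $\Psi$ with $M_\mathrm{R\to v}$ and $\hat{\Psi}$ with $M_\mathrm{L\to s}$ through the replica factorization, is where the computation is most error-prone; the remaining three terms are direct transcriptions of the memoryless derivation of Theorem~\ref{thm:asym}.
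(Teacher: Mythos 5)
Your proposal is correct and follows essentially the same route as the paper, which itself gives no explicit proof of Theorem~\ref{thm:mem} beyond stating that it is obtained ``similarly to the previous subsection'': substitute the RS ansatz into the quantities of Lemma~\ref{lem:mem-nth}, normalize so that each $Z_\bullet(0)=1$ (using~\eqref{eq:asump} to absorb the $-R$ term), take the $n\to 0$ replica limit to turn the $(\cdots)^n$ factors into logarithms yielding~\eqref{eq:RS_mem}, and obtain~\eqref{eq:saddle}, \eqref{eq:saddleL} and~\eqref{eq:saddle_memory} by functional stationarity. Your identification of the bookkeeping in $Z_\mathrm{w}$ (two independent families $M_\mathrm{f\to v}^{(1,i)}$, $M_\mathrm{f\to v}^{(2,i)}$ coupled to $M_\mathrm{R\to v}$ and $M_\mathrm{L\to s}$ through $W$ and $V$) as the delicate step is consistent with the structure of~\eqref{eq:RS_mem}.
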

Although the saddle point equations in~\cite{neri2009statistical} are obtained by simplifying Theorem~\ref{thm:mem}
for finite-state Markov channels,
they are omitted in this paper for lack of space.
Similarly to the memoryless case, we have to deal with 1RSB effect generally.
Unfortunately, the frozen solution does not generally exist.
The necessary and sufficient condition of existence of the frozen solution is that
for any $y\in\mathcal{Y}$, when variables $x_2,s_2$ are fixed,
at most one value $s_1$ exists such that
\begin{equation} \label{eq:frozen}
W(y\mid x_2,s_2)V(s_1\mid y,x_2,s_2)>0
\end{equation}
and when $s_1$ is fixed at most one pair of values $(x_2,s_2)$ satisfying~\eqref{eq:frozen} exists.
This condition is called \textit{hard constraint} in~\cite{martin2005random}.
When a state depends only on the previous state or input, we can obtain a simpler expression of RS solution than Theorem~\ref{thm:mem},
and the necessary and sufficient condition for the existence of the frozen solution becomes weaker.

\section{Numerical calculation}

While Theorem~\ref{thm:mem} gives the expected conditional entropy on general Markov channel, e.g., trapdoor channel,
generally we have to deal with densities of messages similarly to the memoryless case~\cite{RiU05/LTHC},
making implementation of the density evolution involved.
In~\cite{pfister2008joint}, \textit{generalized erasure channel\/} is defined as a channel where
density evolution can be described by one parameter.
As the simplest generalized erasure channel, the dicode erasure channel (DEC) is introduced.
For $\mathcal{X}=\mathcal{S}=\{0,1\}$ and $\mathcal{Y}=\{-1,0,1,*\}$,
the DEC($\epsilon$) is defined as
\begin{align*}
W(y\mid x, s) &=
\begin{cases}
1-\epsilon,& y = x-s\\
\epsilon, & y = *
\end{cases}\\
V(s'\mid y,x,s) &= 1,\hspace{2em} \text{for } s' = x
\end{align*}
for $\epsilon\in[0,1]$.
For the DEC, $m_\mathrm{L\to s}(s)=1/2$.
We modify Theorem~\ref{thm:mem} for the case where the state only depends on the previous input,
on the basis of which the RS solution for the DEC($\epsilon$) is obtained 
without dealing with densities
as
\begin{align*}
&\lim_{N\to\infty}\frac1N\mathbb{E}[\mathbf{E}[\log Z]]
=\extr_{\epsilon_\mathrm{f\to v},\epsilon_\mathrm{v\to f},\epsilon_\mathrm{R\to v},\epsilon_\mathrm{L\to s}}
\bigg\{\\
&2(\epsilon\log \epsilon + (1-\epsilon)\log (1-\epsilon))\\
&-\bigg[\epsilon_\mathrm{R\to v}+(1-\epsilon)\epsilon_\mathrm{L\to s} + 2l \epsilon_\mathrm{f\to v} - \epsilon\epsilon_\mathrm{R\to v}\epsilon_\mathrm{f\to v}^l\\
&-\epsilon_\mathrm{f\to v}^l\left(\epsilon+\frac12(1-\epsilon)\epsilon_\mathrm{R\to v}\epsilon_\mathrm{f\to v}^l\right)
\left(\epsilon+\frac12(1-\epsilon)\epsilon_\mathrm{L\to s}\right)\bigg]\log 2\\
&-\epsilon \log \epsilon-(1-\epsilon)\log (1-\epsilon)
 + \bigg[\epsilon_\mathrm{R\to v} + (1-\epsilon)\epsilon_\mathrm{L\to s} + l\epsilon_\mathrm{f\to v}\\
&- \epsilon_\mathrm{R\to v}\epsilon_\mathrm{f\to v}^l\left(\epsilon + \frac12(1-\epsilon)\epsilon_\mathrm{L\to s}\right)\bigg]\log 2\\
&-\frac{l}{r} (1- (1-\epsilon_\mathrm{v\to f})^r)\log 2
+l(1-(1-\epsilon_\mathrm{f\to v})(1-\epsilon_\mathrm{v\to f})) \log 2
\bigg\}
\end{align*}
where the saddle point equations are
\begin{equation}
\begin{split}
\epsilon_\mathrm{f\to v} &= 1-(1-\epsilon_\mathrm{v\to f})^{r-1}\\
\epsilon_\mathrm{v\to f} &= \epsilon_\mathrm{f\to v}^{l-1}\epsilon_\mathrm{R\to v}\left(\epsilon + \frac12(1-\epsilon)\epsilon_\mathrm{L\to s}\right)\\
\epsilon_\mathrm{R\to v} &= \epsilon + \frac12(1-\epsilon)\epsilon_\mathrm{R\to v}\epsilon_\mathrm{f\to v}^l\\
\epsilon_\mathrm{L\to s} &=  \epsilon_\mathrm{f\to v}^l\left(\epsilon + \frac12(1-\epsilon)\epsilon_\mathrm{L\to s}\right).
\end{split}
\label{eq:saddleDEC}
\end{equation}
This is equivalent to the density evolution of the joint iterative decoding, which we call the belief propagation (BP) decoding, on the DEC($\epsilon$)~\cite{pfister2008joint}.
For the DEC($\epsilon$), the expected conditional entropy is
\begin{align*}
\lim_{N\to \infty} \frac1N \mathbb{E}[H(X\mid Y)] &= \bigg(\lim_{N\to \infty} \frac1N \mathbb{E}[\mathbf{E}[\log Z]]\\
&\quad -\epsilon\log\epsilon-(1-\epsilon)\log(1-\epsilon)\bigg)\frac1{\log 2}\\
= -l\epsilon_\mathrm{f\to v} + \epsilon\epsilon_\mathrm{L\to s}&
+l\frac{r-1}{r}\left(1-(1-\epsilon_\mathrm{f\to v})(1-\epsilon_\mathrm{v\to f})\right)
\end{align*}
where $\epsilon_\mathrm{f\to v}$, $\epsilon_\mathrm{v\to f}$ and $\epsilon_\mathrm{L\to s}$ satisfy~\eqref{eq:saddleDEC}.

In Fig.~\ref{fig:DEC}, 
the results for $(2,4)$, $(3,6)$, $(4,8)$, $(5,10)$ and $(6,12)$ regular ensembles are shown.
When $\epsilon$ is small, there exists only a trivial saddle point, which always gives 0 expected conditional entropy.
A non-trivial saddle point appears above BP threshold.
However, similarly to the memoryless case, except for $(2,4)$-regular ensembles,
the non-trivial saddle point yields negative expected conditional entropy for some regions as in Fig.~\ref{fig:DEC}.
In the region, while BP decoding fails, MAP decoding succeeds since the trivial saddle point should be chosen for the non-negative conditional entropy.
Above the MAP threshold, the non-trivial saddle point exhibits positive expected conditional entropy.
As degrees increase, the BP threshold decreases while the MAP threshold increases.
In contrast to the memoryless case, the BP threshold of $(2,4)$-regular ensemble is higher than that of $(3,6)$-regular ensemble.
The BP threshold and the MAP threshold for $(3,6)$-regular ensemble is about 0.568\,91 and 0.638\,65, respectively.
It implies that the upper bound of the MAP threshold in~\cite{wang2008upper} is tight.
Note that the Shannon threshold of rate-half codes is $\epsilon=(1+\sqrt{17})/8\approx 0.640\,39$.
Similar figures for symmetric memoryless channels and asymmetric memoryless channels can be found in~\cite{di2004thesis} and \cite{neri2008gallager}, respectively.

The method in this paper can be generalized to other models, e.g., 
IRA/ARA LDPC codes~\cite{wang2008upper}, CDMA on a channel with memory, compressed sensing of a Markov source, etc.

\begin{figure}[t]
\psfrag{e}{$\epsilon$}
\psfrag{H(X|Y)}{$\lim_{N\to\infty}(1/N)\mathbb{E}[H(X\mid Y)]$}
\includegraphics[width=\hsize]{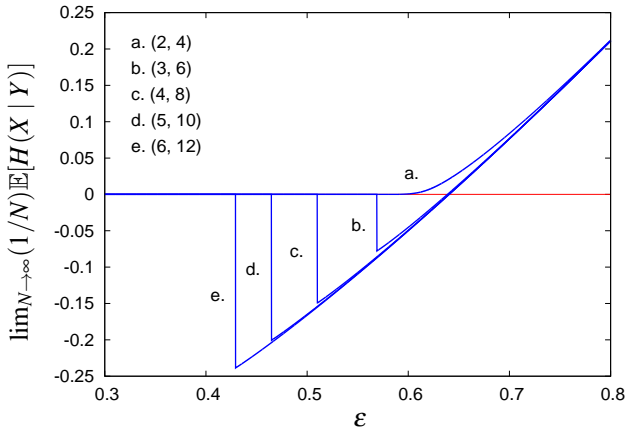}
\caption{\small The expected conditional entropy of LDPC codes on the DEC calculated by the replica method.
The results for $(2,4)$, $(3,6)$, $(4,8)$, $(5,10)$ and $(6,12)$ regular ensembles are plotted.
}
\label{fig:DEC}
\end{figure}

{\footnotesize
\bibliographystyle{IEEEtran}
\bibliography{IEEEabrv,ldpc}
}

\end{document}